\newtheorem{theorem}{Theorem} 
\newcommand{\xG}{\mathbb{G}}
\newcommand{\xe}{e}			
\newcommand{\IsDef}{\triangleq}			
\title{
	First-order transition in Potts models with ``invisible'' states
}
\author{
	Aernout~C.~D. \textsc{van Enter},%
		\footnote{E-mail: \texttt{A.C.D.van.Enter@.rug.nl}} \ 
	Giulio \textsc{Iacobelli},%
		\footnote{E-mail: \texttt{dajegiu@gmail.com}} \
	Siamak \textsc{Taati}
}
\begin{document}

\maketitle

\section{Introduction}
In Refs.~\citen{TTK1,TTK2}, the authors introduced a class of Potts models, in which 
next to $q$ ordinary -visible- colours (the Potts states), between  which  a 
standard ferromagnetic nearest-neighbour Potts interaction exists, $r$
``invisible'' colours (states) are possible, which have zero interaction 
energy with any neighbour, whatever state that neighbour is in.  

Although the number of ground states and low-temperature states equals $q$, and 
there is at low temperatures 
spontaneous symmetry breaking of the $q$-fold permutation symmetry
just as in the standard $q$-state Potts model,  
the transition for low $q=2,3,4$ and high $r$ is different from the second-order
transition of the ordinary two-dimensional $q$-state Potts model. In 
fact a first-order transition in the temperature-parameter appears. 

The occurrence of such a first-order transition contradicts a simple form of 
universality which would predict that all systems with the same broken symmetry 
in the same dimension with short-range interactions 
have the same type of transition.

However, such a universality property is known to be too strong to be true. 
The question of first-order versus second-order is {\em not} a universal question. 
Some counterexamples illustrating this point are the two-dimensional $3$-state 
Kac-Potts model,\cite{GobMer} \ in which a 
first-order transition occurs in presence of a broken $3$-fold rotation 
(= permutation) symmetry, or the three-dimensional versions of the nonlinear 
$O(n)$-models treated in Ref.~\citen{DSS,BGH,ES1,ES2,ES3}, in which a first-order 
transition occurs in presence of a broken continuous rotation symmetry. 
In both cases the same type of symmetry-breaking is also known to be possible 
with a second-order transition. This  occurs for the standard 
nearest-neighbour $3$-state two-dimensional Potts model, or for the standard 
three-dimensional classical Heisenberg or XY models respectively.   

In fact, the model with many invisible states has a first-order transition 
for the same reason the high-$q$ Potts model has a first-order transition. 
At the transition temperature there is coexistence between  
high-energy phases and a high-entropy phase, and between them 
``free-energy barriers'' exist. Such a coexistence can be proven by a 
form of a Peierls-type free-energy-contour argument. For the standard 
Potts model, by now there exists a variety of
such proofs,\cite{KS, BKL,LMMRS,M} \
whether by Reflection Positivity and Chessboard Estimates, or by a 
Pirogov-Sinai argument, either within a spin 
description or in a random-cluster version. Typically those proofs  can be 
adapted without too much effort to include the model described above. 

Inside  standard Potts ``free-energy contours'', as were 
described above,  sites 
exist on the border between  the ordered and disordered phases and 
hence they are neither ordered nor disordered themselves. They have neither all 
neighbours different, nor all neighbours equal, and as a consequence, in dimension
$d$  they lose a free-energy fraction per site of order 
$\frac{1}{2d}$ with respect to the free energy of either a disordered site 
(where free energy  is purely entropic) or with respect to the free energy of 
an ordered site (whose free energy is purely energetic). At the transition 
temperature the entropy of a disordered site approximately 
equals the energy of an ordered site. 

In  the model we will consider below, 
the ordered sites need to be in a visible colour, whereas 
disordered sites can have all neighbours either different or invisible.   
It is enough to  consider only the two-dimensional version, but this is 
not essential, and the arguments directly generalize to higher dimensions.
As the presence of a first-order transition in higher-dimensional Potts 
models is less surprising, the main interest seems to be in two dimensions.  

As  a further comment we mention that the term ``invisible'' is actually 
a bit of a misnomer, as at high temperatures  
the density of ``invisible'' colours is higher than those of the 
``visible'' ones when $r \geq q$. Thus most of the colours which appear would be
the ``invisible'' ones.

\section{Main result}
At each site  there is  a discrete-valued spin variable which 
can take one out of $q+r$ colours, $q$ of which ``visible'' and 
$r$ ``invisible''. 
The $(q,r)$-model then is defined by the following (formal) Hamiltonian
\begin{equation*}
	H=-J\sum_{\langle i,j\rangle}\delta(\sigma_i, \sigma_j)
	\sum_{\alpha=1}^q\delta(\sigma_i, \alpha)
\end{equation*} 
Here the pairs of nearest-neighbour sites $\langle i,j\rangle$
live on a lattice of dimension at least two.
Now we have the following result.

\medskip

\begin{theorem}
	For $q+r$ large enough the above model undergoes a first-order transition 
	in temperature. At the transition temperature $q$ ordered extremal 
	Gibbs states coexist with a disordered extremal Gibbs state.
\end{theorem}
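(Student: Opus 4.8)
The plan is to prove coexistence via a random-cluster (FK) representation combined with a Pirogov–Sinai / Peierls contour argument, exactly mirroring the high-$q$ Potts proof but carrying the invisible states along as an extra entropic weight.

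First I would construct the random-cluster representation for the $(q,r)$-model announced in the abstract. The key observation is that invisible colours contribute no interaction energy: a site in an invisible state behaves, for bond purposes, like a free (wired-to-nothing) site. So the natural representation should partition the partition function as a sum over bond configurations $\omega$ on the edges, weighted by $(e^{\beta J}-1)^{|\omega|}$ as usual, together with a combinatorial weight for each connected cluster. The crucial difference from ordinary Potts is the per-cluster colour-counting factor: an isolated cluster can be coloured by any of the $q$ visible colours \emph{or} by any of the $r$ invisible colours, but an open bond can only be present between two sites sharing a \emph{visible} colour. I would carry out the inclusion–exclusion / high-temperature expansion carefully to extract the correct cluster weight — I expect each cluster $C$ to receive a factor of the form $q + r\cdot[\text{$C$ is a singleton}]$, or more precisely a weight that reduces to $q$ for clusters carrying at least one open bond and to $q+r$ for isolated sites, since two neighbouring invisible sites are never joined by an open bond. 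Establishing the precise weight, and checking the FKG / monotonicity properties survive, is the first technical step.

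\medskip

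Next I would set up the contour argument. Working on $\mathbb{Z}^d$ with $d\geq 2$, I would use the standard dichotomy between the \emph{ordered} regime (dominated by large wired clusters in a single visible colour) and the \emph{disordered} regime (dominated by small clusters, where the factor $q+r$ per singleton provides a large entropic gain). Tuning $\beta$ so that the ordered and disordered free-energy densities coincide defines the candidate transition temperature $\beta_{\mathrm c}$. A contour is the boundary region separating an ordered island from a disordered sea; following the heuristic in the introduction, each contour site loses a free-energy fraction of order $1/(2d)$ relative to both pure phases, so a contour of length $n$ is penalised by a weight decaying like $\lambda^{-n}$ where $\lambda \to \infty$ as $q+r \to \infty$. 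The Peierls estimate then bounds the probability of a contour surrounding the origin by a convergent sum $\sum_n (\text{number of contours of length } n)\,\lambda^{-n}$, which is small once $q+r$ is large enough. This simultaneously yields long-range order in each of the $q$ ordered measures (with wired boundary conditions forcing a given visible colour) and exponential decay of connectivities in the disordered measure (with free boundary conditions).

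\medskip

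Finally I would assemble coexistence. The Peierls bounds show that, at $\beta_{\mathrm c}$, wired-$\alpha$ boundary conditions produce a Gibbs state with nonzero magnetisation in colour $\alpha$ for each of the $q$ visible colours, giving $q$ distinct ordered extremal states related by the permutation symmetry, while free boundary conditions produce a translation-invariant disordered state with no symmetry breaking and a strictly smaller density of open bonds; a discontinuity in this bond density (equivalently in the energy density) across $\beta_{\mathrm c}$ is what upgrades the transition from second to first order. I expect the main obstacle to be the first step — pinning down the exact cluster weight and verifying that the standard Pirogov–Sinai machinery (in particular the two-sided Peierls bound with the $1/(2d)$ surface cost) goes through once the singleton factor $q+r$ is inserted, since it is the interplay between the energetic cost of visible bonds and the enlarged entropy $q+r$ of isolated sites, rather than $q$ alone, that must drive $\lambda$ to infinity. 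Once the representation and the contour weights are correct, the balance of energy versus the enhanced entropy, and hence the existence of a first-order transition for $q+r$ large, follows by the same convergent-contour argument as in the classical high-$q$ Potts case.
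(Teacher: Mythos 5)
Your proposal follows essentially the same route as the paper: you correctly identify the modified random-cluster representation in which non-singleton clusters carry weight $q$ while isolated sites carry weight $q+r$ (the paper's ``$r$-biased'' random-cluster model, with coupling $\pi(\sigma,X)$ supported on bonds joining equal \emph{visible} colours), and you then invoke the same Pirogov--Sinai/Peierls contour machinery of Laanait--Messager--Miracle-Sole--Ruiz--Shlosman and Grimmett, with the $1/(2d)$ surface-cost exponent and the factor $q+r$ driving convergence, to obtain the $q$ ordered states under wired boundary conditions coexisting with the disordered state under free boundary conditions. This matches the paper's sketched proof in both the representation and the contour estimates.
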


\begin{proof}[Proofs]
There are various ways in which one may adapt existing proofs. For example, 
the proof originally due to Koteck\'y and Shlosman,\cite{KS} \ 
later also treated in Refs.~\citen{Geo,Shl}, 
could be adapted by observing that 
\begin{itemize}
	\item Our model has a $C$-potential (in Georgii's terms),
		so reflection positivity holds.
	\item An ordered bond now will be a bond whose two sites have the same
		{\em visible} colour.
	\item The ``restricted ensemble'' for the disordered phase
		is formed by all configurations having disordered bonds only,
		which has an approximate entropy density $\ln (q+r)$,
		when $q+r$ is large.
\end{itemize}
Then the arguments used in section 19.3 of Ref.~\citen{Geo}
or in Ref.~\citen{Shl}, 
using chessboard estimates to provide a contour estimate, apply.

Here we will sketch  in some more detail an alternative
proof based on the Fortuin-Kasteleyn 
random-cluster  description, first derived in Ref.~\citen{LMMRS}, 
and later treated e.g. in Ref.~\citen{Grim}. 
This has the advantage that it extends to values of $q$ and $r$
which need not  correspond to a spin model interpretation, e.g. $q=1$,
or $q$ and{/}or $r$ non-integer.
To do this we will adapt the ordinary random-cluster representation to 
include the model with invisible states.

In analogy with the standard Potts model,\cite{ForKas72,KF} \ 
it is possible to rewrite the partition function for 
the $(q,r)$-Potts model in terms of the partition function
for a variant of the random-cluster model, which we will call the 
``$r$-biased'' random cluster model.
Just as the standard random-cluster model, the $r$-biased model is a 
correlated bond-percolation model.

Let $\xG=(S,B)$ be a finite graph, where $S$ denotes the set of 
sites, and $B$ the set of bonds in the graph. 
The \emph{$r$-biased random-cluster} model on $\xG$ is given by a
probability distribution on the sets $X\subseteq B$.
The distribution 
has three parameters $0\leq p\leq 1$, $q>0$ and $r\geq 0$
and is defined by
\begin{align*}
	\phi_{p,q,r}(X) &=
		\frac{1}{Z^{RC}_{p,q,r}(\xG)}
		\left[\prod_{b\in B}
			p^{\delta(b\in X)}(1-p)^{\delta(b\notin X)}\right]
			(q+r)^{\kappa_0(S,X)}q^{\kappa_1(S,X)} \;,
\end{align*}
in which $\kappa_0(S,X)$ denotes the number of
isolated vertices of the graph $(S,X)$,
$\kappa_1(S,X)$ the number of
non-singleton connected components of $(S,X)$ and
$Z^{RC}_{p,q,r}(\xG)$ the partition function.
Notice that for $r=0$, the model reduces to the standard
random-cluster model, in which both singleton and non-singleton
connected components have weight $q$.
For $r>0$, the above model induces a bias
towards singleton connected components.
Namely, the singleton connected components have weight $(q+r)$
whereas the non-singleton connected components have weight $q$.

Let us now see how the $(q,r)$-Potts model is related
to the $r$-biased random-cluster model.
Let $\Omega$ be the set of $(q,r)$-Potts configurations on $\xG$.
The partition function of this model can be rewritten as
\begin{subequations}
\begin{align*}
	Z_\beta(\xG) &=
		\sum_{\sigma\in\Omega}\xe^{
			\beta\sum_{\{i,j\}\in B} \delta(\sigma_i=\sigma_j\leq q)
		} \\
	&=		\sum_{\sigma\in\Omega} \prod_{\{i,j\}\in B}
			\xe^{\beta\delta(\sigma_i=\sigma_j\leq q)} \\
	&=		\sum_{\sigma\in\Omega} \prod_{\{i,j\}\in B}
			\left[1+\delta(\sigma_i=\sigma_j\leq q)(\xe^\beta-1)\right] \\
	&=	\sum_{\sigma\in\Omega}\sum_{X\subseteq B}
			\prod_{\{i,j\}\in X}\delta(\sigma_i=\sigma_j\leq q)
			(\xe^\beta-1)^{|X|} \\
	&=		\sum_{\sigma\in\Omega}\sum_{X\subseteq B}
			\pi(\sigma,X) \;,
\end{align*}
\end{subequations}
where
\begin{align*}
	\pi(\sigma,X) &=
		\xe^{\beta|B|}
		\prod_{\{i,j\}\in B}
		\left[
			\delta(\{i,j\}\in X)\delta(\sigma_i=\sigma_j\leq q)(1-\xe^{-\beta}) +
			\delta(\{i,j\}\notin X)\xe^{-\beta}	
		\right]\;.
\end{align*}
The  expression above  describes a coupling of the
$(q,r)$-Potts distribution on $\Omega=\{1,...q+r \}^{S}$
and a probability distribution on the space $\{0, 1\}^{B}$
(compare Ref.~\citen{ES}).
The marginal of this coupling on the space $\{0,1\}^{B}$
is simply the $r$-biased random-cluster distribution
$\phi_{p_\beta,q,r}$ with $p_\beta=1-\xe^{-\beta}$.
In particular, the weight $\pi(\sigma,X)$ can also be expressed as
\begin{align*}
	\pi(\sigma,X) &=
		\xe^{\beta|B|}\cdot 1_{F_r}(\sigma,X)\cdot
		\prod_{\{i,j\}\in B}
                \left[
			p_\beta\,\delta(\{i,j\}\in X) +		
			(1-p_\beta)\,\delta(\{i,j\}\notin X)
		\right] \;,
\end{align*}
where
\begin{align*}
	F_r &\IsDef\left\{
		(\sigma,X): \sigma_i=\sigma_j\leq q
		\text{ for all } \{i,j\}\in X
	\right\} \;.
\end{align*}

This expression gives us the model with free boundary conditions.
The wired boundary conditions in the $(q,r)$-model correspond to all spins on 
the boundary having the same visible colour. 

The Peierls (free-energy) contour estimate is almost unchanged in comparison 
with the standard case. The only difference is that,
(except for the degeneracy term of the free contours) the
contours of Ref.~\citen{LMMRS} satisfy the same
Peierls estimate with~$q$ replaced by~$q+r$.
More precisely, in Ref.~\citen{Grim}, in equation~(7.59), the adapted Peierls 
estimate for wired contours is of the form 
\begin{equation*}
	\Phi_w(\gamma_w) \leq (q+r)^{-\frac{\Vert\gamma_w\Vert}{2d}}
	\;\xe^{ 5\Vert\gamma_w\Vert}\;,
\end{equation*}
whereas the corresponding estimate for free contours, below
equation~(7.61) of Ref.~\citen{Grim},
takes the form 
\begin{equation*}
	\Phi_f(\gamma_f) \leq q\;(q+r)^{-\frac{\Vert\gamma_f\Vert}{2d}}
	\;\xe^{5 \Vert\gamma_f\Vert}\;. 
\end{equation*}
Convergence of the Peierls estimate will hold once~$q+r$ is large enough, as 
we then have (both for free and wired contours) an estimate of the form
\begin{equation*}
	\Phi(\gamma) \leq \xe^{-\tau\Vert\gamma\Vert}\;,
\end{equation*}
for a sufficiently large~$\tau$.

For a more detailed treatment we refer to Refs.~\citen{eit2,I}.
\end{proof}

\section{Comments and conclusions}
In this note we have shown how the Potts model with many invisible states, 
introduced by Tamura, Tanaka and Kawashima, can be proven to have a first-order 
phase transition, similarly as occurs for the standard high-$q$ Potts models.
The transition temperature is asymptotically given by $\beta\approx\ln(q+r)$. 
The proofs, as usual, apply for quite high values of $q$ and/or $r$, the 
numerical approach of Refs.~\citen{TTK1,TTK2} might give a better indication of the values at which the first-order transition first occurs.  

We conjecture that the  dynamical properties of the Potts model with $r$ 
invisible states, which were considered in Ref.~\citen{TT},
have a similar corresponding behaviour as occurs in the ordinary Potts model,
for which they were rigorously  analysed in Ref.~\citen{BCKFTVV}.

\section*{Acknowledgements}
G.~I. and S.~T. thank NWO for support. 

%

\end{document}